\documentclass[conference]{IEEEtran}
\IEEEoverridecommandlockouts

\usepackage[acronym,nonumberlist]{glossaries}
\glsdisablehyper
\usepackage[ruled]{algorithm}
\usepackage[noend]{algpseudocode}
\usepackage{algorithmicx}
\usepackage{comment}
\usepackage{graphicx}
\usepackage{multirow}
\usepackage{amsmath,amsfonts,amssymb}
\usepackage{amsthm}
\usepackage{mathtools}
\usepackage{cite}
\usepackage{url}
\usepackage{xcolor}
\usepackage{hyperref}
\usepackage{booktabs}
\usepackage{makecell}
\def\BibTeX{{\rm B\kern-.05em{\sc i\kern-.025em b}\kern-.08em
    T\kern-.1667em\lower.7ex\hbox{E}\kern-.125emX}}

\newtheorem{theorem}{Theorem}

\newtheorem{definition}{Definition}
\newtheorem{remark}{Remark} % Unnumbered remark
\newacronym{}{}{}

\begin{document}

\newacronym{ACP}{ACP}{Accelerator Coherency Port}
\newacronym{ADC}{ADC}{Analog-to-Digital Converter}
\newacronym{ADAS}{ADAS}{Advanced Driver Assistance System}
\newacronym{ADESD}{ADESD}{Application-Driven Embedded System Design}
\newacronym{ADPCM}{ADPCM}{Adaptive Differential Pulse-Code Modulation}
\newacronym{AES}{AES}{Advanced Encryption Standard}
\newacronym{AET}{AET}{Average Execution Time}
\newacronym{AI}{AI}{Artificial Intelligence}
\newacronym{ALU}{ALU}{Arithmetic Logic Unit}
\newacronym{AMBA}{AMBA}{ARM Advanced Microcontroller Bus Architecture}
\newacronym{ANN}{ANN}{Artificial Neural Network}
\newacronym{AOP}{AOP}{Aspect Oriented Programming}
\newacronym{API}{API}{Application Program Interface}
\newacronym{ASIC}{ASIC}{Application-Specific Integrated Circuit}
\newacronym{AV}{AV}{Autonomous Vehicle}
\newacronym{AXI}{AXI}{Advanced eXtensible Interface}
\newacronym{BLEU}{BLEU}{Bilingual Evaluation Understudy}
\newacronym{CA}{CA}{Certificate Authority}
\newacronym{CAM}{CAM}{Cooperative Awareness Message}
\newacronym{CAP-OS}{CAP-OS}{Configuration Access Port Operating System}
\newacronym{CI}{CI}{Confidence Interval}
\newacronym{CPM}{CPM}{Collective Perception Message}
\newacronym{CPN}{CPN}{Counter Propagation Network}
\newacronym{CPS}{CPS}{Cyber-Physical System}
\newacronym{CV}{CV}{Computer Vision}
\newacronym{DAC}{DAC}{Digital-to-Analog Converter}
\newacronym{DAG}{DAG}{Directed Acyclic Graph}
\newacronym{DDR}{DDR}{Double Data Rate}
\newacronym{DDS}{DDS}{Data Distribution Service}
\newacronym{DENM}{DENM}{Decentralized Environmental Notification Message}
\newacronym{DFT}{DFT}{Discrete Fourier Transform}
\newacronym{DMA}{DMA}{Direct Memory Access}
\newacronym{DPM}{DPM}{Dynamic Power Management}
\newacronym{DSP}{DSP}{Digital Signal Processing}
\newacronym{DTMF}{DTMF}{Dual-Tone Multi-Frequency}
\newacronym{DVFS}{DVFS}{Dynamic Voltage-Frequency Scaling}
\newacronym{ECC}{ECC}{Elliptic Curve Cryptography}
\newacronym{ECDH}{ECDH}{Elliptic Curve Diffie-Hellman}
\newacronym{ECDSA}{ECDSA}{Elliptic Curve Digital Signature Algorithm}
\newacronym{ECU}{ECU}{Electronic Control Unit}
\newacronym{EDA}{EDA}{Electronic Design Automation}
\newacronym{ED}{ED}{Event-Driven}
\newacronym{EDF}{EDF}{Earliest Deadline First}
\newacronym{ELSF}{ELSF}{Optimal Effective Least Slack First}
\newacronym{EPOS}{EPOS}{Embedded Parallel Operating System}
\newacronym{ESA}{ESA}{External Security Agent}
\newacronym{ESM}{ESM}{Erlangen Slot Machine}
\newacronym{ETSI}{ETSI}{European Telecommunications Standards Institute}
\newacronym{FANN}{FANN}{Fast Artificial Neural Network}
\newacronym{FBD}{FBD}{Family-Based Design}
\newacronym{FFT}{FFT}{Fast Fourier Transform Transform}
\newacronym{FoV}{FoV}{Field-of-View}
\newacronym{FPGA}{FPGA}{Field-Programmable Gate Array}
\newacronym{FPP}{FPP}{Fast Passive Parallel}
\newacronym{FTTSTP}{FT-TSTP}{Fault-Tolerant Trustful Space-Time Protocol}
\newacronym{GIP}{GIP}{Gateway Integrity Checking Protocol}
\newacronym{GPS}{GPS}{Global Positioning System}
\newacronym{GPU}{GPU}{Graphics Processing Unit}
\newacronym{HDL}{HDL}{Hardware description language}
\newacronym{HDV}{HDV}{Heavy-duty Vehicle}
\newacronym{HECOPS}{HeCoPS}{Heuristic Environmental Consideration Over Positioning System}
\newacronym{HLS}{HLS}{High-Level Synthesis}
\newacronym{HPC}{HPC}{Hardware Performance Counter}
\newacronym{HTI}{HTI}{Hardware Thread Interface}
\newacronym{ICAP}{ICAP}{Internal Configuration Access Port}
\newacronym{iLBC}{iLBC}{Internet Low Bitrate Codec}
\newacronym{IoT}{IoT}{Internet of Things}
\newacronym{IMU}{IMU}{Inertial Measurement Unit}
\newacronym{ISA}{ISA}{Instruction Set Architecture}
\newacronym{ISR}{ISR}{Interrupt Service Routine}
\newacronym{ITS}{ITS}{Intelligent Transportation System}
\newacronym{LAN}{LAN}{Local Area Network}
\newacronym{LLF}{LLF}{Least Laxity First}
\newacronym{LRG}{LRG}{Least Recently Granted}
\newacronym{LSF}{LSF}{Least Slack First}
\newacronym{MAC}{MAC}{Dynamic Random Access Memory}
\newacronym{MAPE}{MAPE}{Mean Absolute Percent Error}
\newacronym{ML}{ML}{Machine Learning}
\newacronym{MPI}{MPI}{Message Passing Interface}
\newacronym{MPSoC}{MPSoC}{Multiprocessor System-on-Chip}
\newacronym{MV}{MV}{Motion Vector}
\newacronym{NBTI}{NBTI}{Negative-Bias Temperature Instability}
\newacronym{NLP}{NLP}{Natural Language Processing}
\newacronym{NN}{NN}{Neural Network}
\newacronym{NoC}{NoC}{Network on Chip}
\newacronym{NT}{NT}{Neural Translation}
\newacronym{OBU}{OBU}{On-board Unit}
\newacronym{OL-EAMS}{OL-EAMRTS}{Online-Learning Energy-Aware Multicore Real-Time Scheduler}
\newacronym{OOP}{OOP}{Object Oriented Programming}
\newacronym{OPCUA}{OPC UA}{OPC Unified Architecture}
\newacronym{OSIF}{OSIF}{Operating System Interface}
\newacronym{OS}{OS}{Operating System}
\newacronym{OTP}{OTP}{One-Time Password}
\newacronym{PABX}{PABX}{Private Automatic Branch Exchange}
\newacronym{PCA}{PCA}{Principal Component Analysis}
\newacronym{PCAP}{PCAP}{Processor Configuration Access Port}
\newacronym{PCM}{PCM}{Pulse-Code Modulation}
\newacronym{PIO}{PIO}{Programmed Input-Output}
\newacronym{PLC}{PLC}{Programmable Logic Controller}
\newacronym{PMU}{PMU}{Performance Monitoring Unit}
%\newacronym{PMU}{PMU}{Power Management Unit}
\newacronym{PTP}{PTP}{Precision Time Protocol}
\newacronym{PVT}{PVT}{Process, Voltage and Temperature}
\newacronym{QoS}{QoS}{Quality of Service}
\newacronym{RAM}{RAM}{Random-Access Memory}
\newacronym{RFE}{RFE}{Recursive Feature Elimination}
\newacronym{RI}{RI}{Reconfiguration Interface}
\newacronym{RL}{RL}{Reinforcement Learning}
\newacronym{RMI}{RMI}{Remote Method Invocation}
\newacronym{RM}{RM}{Rate-Monotonic}
\newacronym{RMSD}{RMSD}{Root-Mean-Square Deviation}
\newacronym{RSA}{RSA}{Rivest-Shamir-Adleman Public-key Cryptography}
\newacronym{RSS}{RSS}{Responsibility-Sensitive Safety}
\newacronym{RSU}{RSU}{Roadside Unit}
%\newacronym{RTAMT}{RTAMT}{Run-time Analog Monitoring Tool}
\newacronym{RTAMT}{RTAMT}{Run-time Assurance Monitoring Tool}
\newacronym{RTL}{RTL}{Register-Transfer Level}
\newacronym{RTOS}{RTOS}{Real-Time Operating System}
\newacronym{RTSNoC}{RTSNoC}{Real-Time Star Network-on-Chip}
\newacronym{SEU}{SEU}{Safety Enforcement Unit}
\newacronym{SIMD}{SIMD}{Single Instruction, Multiple Data}
\newacronym{SI}{SI}{International System of Units}
\newacronym{SoC}{SoC}{System-on-a-Chip}
\newacronym{SPTP}{SPTP}{Speculative Precision Time Protocol}
\newacronym{SRAM}{SRAM}{Static Random-Access Memory}
\newacronym{SSB}{SSB}{Smart Solar Building}
\newacronym{STL}{STL}{Signal Temporal Logic}
\newacronym{TA}{TA}{Trusted Authority}
\newacronym{TDMA}{TDMA}{Time Division Multiple Access}
\newacronym{TL}{TL}{Temporal Logic}
\newacronym{TSCH}{TSCH}{Time-Synchronized Channel Hopping}
\newacronym{TSN}{TSN}{Time-Sensitive Networking}
\newacronym{TSSDN}{TSSDN}{Time-Sensitive Software-Defined Networks}
\newacronym{TSTP}{TSTP}{Trustful Space-Time Protocol}
\newacronym{TT}{TT}{Time-Triggered}
\newacronym{V2I}{V2I}{Vehicle-to-Infrastructure Communication}
\newacronym{V2V}{V2V}{Vehicle-to-Vehicle Communication}
\newacronym{VHDL}{VHDL}{VHSIC Hardware Description Language}
\newacronym{VLAN}{VLAN}{Virtual Local Area Networking}
\newacronym{WCET}{WCET}{Worst Case-Execution Time}
\newacronym{WCRT}{WCRT}{Worst-Case Response Time}
\newacronym{WSN}{WSN}{Wireless Sensor Network}

% Acronyms added from Murillo's undergrad thesis:

\newacronym{PINN}{PINN}{Physics-Informed Neural Network}
\newacronym{CARLA}{CARLA}{Cars Learning to Act}
\newacronym{FCW}{FCW}{Forward Collision Warning}
\newacronym{AEB}{AEB}{Automatic Emergency Braking}
\newacronym{ASIL}{ASIL}{Automotive Safety Integrity Levels}
\newacronym{FNN}{FNN}{Feedforward Neural Network}
\newacronym{RNN}{RNN}{Recurrent Neural Network}
\newacronym{LSTM}{LSTM}{Long Short-Term Memory}
\newacronym{RLS}{RLS}{Recursive Least Squares}
\newacronym{LSM}{LSM}{Least Mean Squares}
\newacronym{VLDM}{VLDM}{Vehicle Longitudinal Dynamical Model}
\newacronym{FuSa}{FuSa}{Functional Safety}
\newacronym{SLSD}{SLSD}{Same Lane Same Direction}
\newacronym{UE4}{UE4}{Unreal Engine 4}
\newacronym{SD}{SD}{SmartData}
\newacronym{SDAV}{SDAV}{SmartData for Autonomous Vehicles}
\newacronym{DL}{DL}{Deep Learning}
\newacronym{MB}{MB}{Model-Based}
\newacronym{PDE}{PDE}{Partial Differential Equations}
\newacronym{RPM}{RPM}{Rotations Per Minute}

\newacronym{CAN}{CAN}{Control Area Network}

\title{Ensuring Data Freshness in Multi-Rate Task Chains Scheduling}

\author{\IEEEauthorblockN{1\textsuperscript{st} José Luis Conradi Hoffmann \qquad 2\textsuperscript{nd} Antônio Augusto Fröhlich}
\IEEEauthorblockA{\textit{Software/Hardware Integration Lab, Federal University of Santa Catarina} \\
\textit{Florianópolis, Brazil} \\
\{hoffmann, guto\}@lisha.ufsc.br}
}

\maketitle

\begin{abstract}
In safety-critical autonomous systems, data freshness presents a fundamental design challenge. While the Logical Execution Time (LET) paradigm ensures compositional determinism, it often does so at the cost of injected latency, possibly degrading the age of data on high-frequency control loops. Furthermore, heterogeneous, multi-rate, task dependencies is typically guaranteed inefficiently through oversampling. This paper proposes a Task-based scheduling framework extended with data freshness constraints. Unlike traditional models, scheduling decisions are driven by the lifespan of data. We introduce a formal methodology to decompose Data Dependency Graphs into dominant paths by tracing the strictest data freshness constraints backward from the actuators. Based on this decomposition, we propose an offset search algorithm that synchronizes multi-rate, multi-dependencies, task chains. This approach enforces end-to-end data freshness without the artificial latency of LET buffering, a trade-off between data freshness and execution determinism. We formally prove that this offset-based alignment preserves the 100\% schedulability capacity of Global EDF while addressing data freshness guarantees.
\end{abstract}

\begin{IEEEkeywords}
Data Freshness, Real-Time Scheduling, Task-Chains, Earliest Deadline First.
\end{IEEEkeywords}

\section{Introduction}\label{sec:intro}

The paradigm shift towards high-level autonomous driving demands software architectures capable of managing distributed critical applications while guaranteeing safety constraints. Modern automotive systems rely on distributed multi-core Electronic Control Units (ECUs) interfaced by standards such as AUTOSAR to abstract functional logic from hardware topology \cite{autosar2022timing}. The correctness of these systems depends not only on the logical result of computation but also on the time at which these results are produced \cite{buttazzo2011hard}. The foundational \textit{Time-Triggered Architecture (TTA)} by Kopetz established that a global time base is a prerequisite for composability in distributed real-time systems, simplifying error detection and guaranteeing timeliness \cite{Kopetz:TTA-IEEE:2003}. Today, these concepts underpin safety certification under ISO 26262, which relies on the Fault Tolerant Time Interval (FTTI), the minimum time span between a fault and a possible occurrence of a hazardous event \cite{iso26262} if a safe mechanims is not activated. Consequently, data age becomes as critical as the logical correctness of the control algorithms themselves.

To address temporal interference, the Logical Execution Time (LET) paradigm has emerged as a dominant design pattern \cite{henzinger2003giotto, kirsch2012let}. By decoupling communication from execution, LET ensures that task input/output timing is independent of execution platform speed. Recent advancements have extended this to distributed networks via System-Level LET (SL-LET) \cite{gemlau2022system}, and proposed algebraic approaches to collapse task chains into zero-jitter logical tasks \cite{bini2023zero}. However, achieving this determinism often forces a trade-off. To guarantee that end-to-end latency remains within the FTTI while adhering to fixed LET intervals, designers typically over-dimension resources or accept stale data, effectively trading data age for scheduling predictability. For fast dynamic systems, such as Lane Keeping Assist (LKA), using the freshest data is often more critical for maintaining stability than eliminating jitter. If lateral position data is old due to buffering, the steering controller calculates a reaction for a road curvature the vehicle has already passed, potentially leading to safety breaches.

To address this trade-off, we propose a fundamental inversion of the design model: rather than permitting the scheduler to execute tasks as soon as possible, release times are constrained by the data freshness implied by task dependency. Unlike traditional real-time analysis where tight deadlines dictate high priority, we argue that tasks with large data freshness windows (higher laxity) should be assigned earlier offsets. This ensures that high-latency operations are processed in advance, effectively anchoring the timeline. Conversely, tasks with strict data freshness constraints (short data freshness) are assigned later offsets, delaying their execution until the precise moment their data are required.

In this work, we formalize the logic to synchronize heterogeneous sensor-to-actuator task chains. By analyzing the task dependency graph backwards from the actuation sink, we establish release offsets based on data freshness. This prefetching of high-laxity data (e.g., camera images) ensures that highly dynamic data (e.g., IMU) are sampled Just-In-Time (JIT) to match the laxity window of high-laxity ones. Once the offsets are applied, a runtime scheduler can manage execution using standard Earliest Deadline First (EDF). This hybrid approach eliminates the \textbf{stale data} problem caused by uncoordinated early execution without the artificial latency overhead of Logical Execution Time (LET) buffering.

\subsection{Contributions}The main contributions of this paper are summarized as follows:
\begin{enumerate}
    \item \textbf{Formal Path Decomposition:} A methodology for decomposing Task Dependency DAGs into dominant data freshness paths, identifying the temporal relationship between high-latency tasks and highly dynamic data.
    \item \textbf{Offset Assignment based on Data Freshness:} A start-time optimization algorithm that inverts standard urgency metrics, assigning earlier static offsets to tasks with larger data freshness to anchor the fusion window.
    \item \textbf{Shared Producer Consensus Search:} A backtracking algorithm to resolve offset conflicts in shared producer tasks, ensuring data freshness across heterogeneous multi-rate consumer chains without breaking periodicity.
    \item \textbf{Schedulability Preservation Proof:} A formal proof demonstrating that the introduction of data freshness-aware offsets preserves the $100\%$ capacity of Global EDF, ensuring that data freshness is achieved without degrading the system's processor utilization bounds.
\end{enumerate}

The remainder of this paper is organized as follows: Section \ref{sec:related} presents this work position in regards to other works from the state of the art addressing multi-task chain. Section \ref{sec:model} presents the definitions pertaining the system model. Section \ref{sec:proposal} presents the proposed solution to adapt start offset of tasks to optimize data freshness in task chains. Section \ref{sec:proof} presents the formal proof of the schedulability capacity preservation of Global EDF. Finally, Section \ref{sec:conclusion} concludes the paper.

\section{Related Works}\label{sec:related}

The scheduling of safety-critical automotive systems has traditionally been approached through the lens of task-centric deadlines. This section reviews the evolution from classical hard real-time scheduling to modern data-centric paradigms, highlighting the shift from task schedulability to end-to-end data freshness in Directed Acyclic Graphs (DAGs).

\subsection{Predictability in Task Chains}
The foundation of hard real-time systems lies in guaranteeing that critical tasks meet their deadlines under worst-case conditions. Fixed Priority Scheduling established rigorous Response Time Analysis (RTA) for single-core and distributed systems \cite{tindell1994holistic}. 
To address composability, Kopetz introduced the Time-Triggered Architecture (TTA) \cite{Kopetz:TTA-IEEE:2003}, advocating for a global time base to decouple communication from processing.

However, modern autonomous systems have evolved into complex, multi-rate Cause-Effect Chains. Becker \textit{et al.} \cite{Becker:2017} proposed a comprehensive end-to-end timing analysis for such chains, enabling analysis based on \gls{WCET} estimates even without detailed platform knowledge. While effective for verifying register-based communication, this model faces challenges in modern architectures where hidden latencies introduced by security algorithms, transient network loads, and complex AI data dependencies significantly impact system performance.

Recent research into Conditional DAGs has addressed the complexity of intra-task priority assignment to minimize response times on multiprocessors \cite{Qingqiang2023} and Response-Time Analysis for conditional branches in tasks \cite{melani2015}. While these methods optimize for \textit{execution urgency}, they do not inherently account for \textit{data freshness}.

\subsection{Logical Execution Time (LET) and Task Chains}
To reconcile preemption flexibility with temporal determinism, the Logical Execution Time (LET) paradigm \cite{henzinger2003giotto} fixes task input/output timing. This concept, standardized in AUTOSAR \cite{autosar2022timing}, has been the subject of intense optimization. 

Bini \textit{et al.} \cite{bini2023zero} proposed an algebraic framework to collapse periodic LET chains into single zero-jitter logical tasks. While this eliminates jitter, it enforces rigid structural latencies. Addressing multi-core contention, Wang \textit{et al.} \cite{bini2025jitter} highlights that uncontrolled jitter is a major threat to data integrity, providing a linear-time complexity analysis on jitter at read and write events. Our solution uses a Consensus Search to fix static offsets, effectively damping the jitter propagation analyzed in their work. The data freshness constraint is used to build a schedule where the temporal slack is strategically placed at the start, making the system naturally robust to the propagation effects.

\subsection{Task Dependency and Data Freshness}
Palencia et al. \cite{palencia1998schedulability} established that offsets can be used to mitigate the pessimism of Response Time Analysis (RTA) in transactions, their approach remains task-centric, focusing on processor interference. Our work extends the utility of offsets into the semantic domain, using them to align the data freshness windows of heterogeneous data streams. We demonstrate that by construction, specifically through the offset assignment based on data freshness, we can ensure data freshness while maintaining the capacity preservation.

Recent literature has increasingly focused on semantic metrics of data freshness, distinguishing between simple precedence constraints and actual temporal properties like \textit{Data Age} and \textit{Reaction Time}. Gohari \textit{et al.} \cite{Gohari2022} has provided a rigorous analysis for lower and upper bound data age in multi-rate DAGs under timing uncertainty. However, their solution remains primarily analytical, quantifying the data freshness degradation inherent in existing schedules. 
Günzel et al. \cite{Guenzel2023} highlights the increasing complexity of automotive timing analysis, where heterogeneous communication mechanisms (e.g., LET, Implicit Communication) introduce varying levels of latency into cause-effect chains.

Existing frameworks for DAGs primarily focus on bounding the Worst-Case Data Age (WCDA) under standard scheduling policies where tasks are released as soon as possible. In these models, data age accumulates passively while fresh data waits in output buffers for consumer activation. While these works provide rigorous \textit{analysis} methods to bound the age of a given schedule, they generally assume the schedule is driven by standard execution priorities (e.g., EDF or Rate Monotonic). Our work differs by shifting the focus from \textit{analysis} to \textit{construction}. Rather than accepting the data freshness degradation inherent in regular scheduling, we employ a \textit{Start-Time Optimization} strategy to actively inject delays, ensuring data freshness alignment by design. 

While frameworks like Zero-Jitter LET \cite{bini2023zero} achieve synchronization by buffering (delaying) the \textit{output}, our method achieves synchronization by delaying the \textit{input} (the sampling instant) of the faster tasks. This allows the system to handle high-latency dependencies (like computer vision tasks) by starting them early, while the critical path (dominant dependency) is scheduled to execute only when the high-latency data is nearing availability, thereby ensuring data freshness at the moment of fusion.

To illustrate the difference, consider a fusion task $\tau_{fus}$ that requires data from a high-latency computer-vision task ($\tau_{vis}, C_{vis}=10ms$), where $C_i$ is the Worst-Case Estimation Time (WCET), and a low-latency IMU processing task ($\tau_{imu}, C_{imu}=1ms$). We assume two parallel cores:
\begin{itemize}
    \item \textbf{Standard Approach (Analysis):} Both tasks are released at $t=0$. $\tau_{imu}$ finishes at $t=1$, while $\tau_{cam}$ finishes at $t=10$. The fusion occurs at $t=10$. Consequently, the IMU data sits in the output buffer for $\Delta t = 9ms$, resulting in a data age of $10ms$ at the point of fusion.
    \item \textbf{Proposed Approach (Construction):} We apply a \textit{Start Offset} to the dominant task. $\tau_{cam}$ starts at $t=0$ (due to its large data freshness window/latency). However, $\tau_{imu}$ is offset to start at $t=9$. Both tasks finish exactly at $t=10$. The IMU data age at fusion is minimal ($1ms$), eliminating the $9ms$ of "stale" waiting time.
\end{itemize}
Our approach shifts from analyzing the age of execution flows to constructing flows that minimize waiting times by design.

\section{System and Task Model}\label{sec:model}

We consider a real-time application modeled as a Directed Acyclic Graph (DAG) $G = (\mathcal{T}, \mathcal{E})$, where nodes represent computational tasks and edges represent data dependencies.

\subsection{Task Model}
The task set $\mathcal{T} = \{ \tau_1, \dots, \tau_n \}$ consists of $n$ periodic tasks. Each task $\tau_i$ is characterized by the tuple:
\begin{equation}
    \tau_i = (C_i, T_i, \Phi_i, P_i)
\end{equation}
where:
\begin{itemize}
    \item $C_i$ denotes the \textbf{Worst-Case Execution Time}.
    \item $T_i$ represents the \textbf{Period} of the task.
    \item $P_i$ is the priority of the task.
    \item $\Phi_i$ is the \textbf{Start Offset} relative to the period arrival. In this paper, $\Phi_i$ is a synthesis variable used to enforce data freshness alignment.\footnote{While Bini et al.~\cite{bini2023zero} also use offsets, they apply them to delay the \textit{output} (buffering) to force regularity. In contrast, we apply $\Phi_i$ to delay the \textit{input} (release) to minimize waiting time.}
\end{itemize}

The set of edges $\mathcal{E} \subseteq \mathcal{T} \times \mathcal{T}$ defines the flow of information. Each directed edge $e_{i,j} = (\tau_i, \tau_j) \in \mathcal{E}$ represents a dependency where $\tau_i$ produces data consumed by $\tau_j$, associated with a specific data freshness constraint $E_{i,j}$.
This parameter defines the maximum relative duration that data produced by $\tau_i$ remains useful for a consumer $\tau_j$. Otherwise we say the data has expired, and the computational validity becomes 0. For instance, in a Least Laxity First Schedule, the priority would be equal to the data freshness constraint $E_{i,j}$.

This concept is based on \cite{Frohlich:SD-IJSN:2018}, and it can be derived from the physical dynamics, such as the ability of the data to still represent the system environment, based on the control loop definitions. When compared to deadline, the expiry constraint is relative to the completion of the task and dictates the limit for its consumption by the depending task, while the deadline constraint is relative to the release of the task and dictates the limit for the task itself to finish.

To rigorously define data freshness, we establish the following temporal reference points for any $k$-th job of task $\tau_i$:
\begin{definition}[Release Time $r_{i,k}$]
The instant the $k$-th job of task $i$ is instantiated and becomes ready for scheduling plus the specified offset.
\begin{equation}
    r_{i,k} = (k-1)T_i + \Phi_i
\end{equation}
\end{definition}
\begin{definition}[Start Time $s_{i,k}$]
The instant the job $k$ of task $i$ actually begins execution on the processor. For a consumer task, this is the moment it reads its input buffers.
\begin{equation}
    s_{i,k} \geq r_{i,k}
\end{equation}
\end{definition}
\begin{definition}[Finish Time $f_{i,k}$]
The instant the $k$-th job of task $i$ completes its execution and produces output data.
\begin{equation}
    f_{i,k} \leq s_{i,k} + C_i
\end{equation}
\end{definition}

From the set of Edges $\mathcal{E} \in G$, we define:
\begin{definition} \textbf{Successor Set ($succ(\tau_i)$):} The set of tasks that are dependent on task $\tau_i$ (e.g., consumers).
    \begin{equation}
        succ(\tau_i) = \{ \tau_j \in \mathcal{T} \mid (\tau, \tau_j) \in \mathcal{E} \}
    \end{equation}
\end{definition}
\begin{definition} \textbf{Predecessor Set ($pred(\tau)$):} The set of dependencies of task $\tau_i$ (e.g., producers).
    \begin{equation}
        pred(\tau_i) = \{ \tau_j \in \mathcal{T} \mid (\tau_j, \tau_i) \in \mathcal{E} \}
    \end{equation}    
\end{definition}

Finally, $J_i$ is the set of jobs of task $\tau_i$.

\subsubsection{System Assumptions}
We assume the following data dependency semantics for the remaining of the paper:
\begin{itemize}
    \item \textbf{Blocking dependency:} In a task-chain system, where tasks depends on the result of others to compute valid outputs, we assume data dependency to be a blocking constraint for a task execution: a task $\tau_j$ that depends on the result of a task $\tau_i$ will have its release delayed waiting for the result of $\tau_i$ to be available. Mathematically, if the job $m$ of $\tau_j$ depends on the result of job $k$ of $\tau_i$, $s_{j,m} \geq f_{i,k}+L_{ij}$.
    \item \textbf{Independent Release:} Tasks with no dependency, i.e., $pred(\tau_i) = \emptyset$, are not bound by precedence constraints, and may be scheduled at any offset $\Phi_i | f_{i,k} < s_{j,k} ~\forall \tau_j \in succ(\tau_i)$. If $succ(\tau_i) = \emptyset$, then $\Phi_i$ is only limited by $\tau_i$ being able to complete before $T_i$ in the worst case scenario, i.e., $\Phi_i \leq T_i-C_i$.
    \item \textbf{Localized Data Freshness (Non-Transitive Data Age):} 
    We assume the data freshness attribute is reset at each processing stage. The age of a derived sample produced by $\tau_j$ is relative to $\tau_j$'s finishing time, not its inputs finishing time. \textit{Rationale:} the data freshness of a task \textit{output} represents the useful life of the current computation result.
    \item \textbf{Actuation-Driven Phasing:} 
    The system timeline is anchored to the actuation requirements. We assume the first actuation dictates the periodicity start ($t=0$).
    \textit{Rationale:} The initial period is treated as a safe "booting phase" where the system is permitted to pre-load the pipeline. This allows high-latency data (where $E_{i,j} > T_i$) to be sampled in the negative timeline (pre-boot) or during the first cycle, ensuring that steady-state data freshness constraints are met from the very first effective actuation.
\end{itemize}

We operate under the following assumptions to ensure system predictability:
\begin{itemize}
    \item \textbf{Well-Behaved Pipeline:} $\Phi_i + C_i + L_{ij} < T_i$. This ensures the entire lifecycle of a job, from its offset-adjusted release to the arrival of its results at a consumer, is contained within one period $T_i$. In CPS, this prevents "backlog accumulation," where a producer task would otherwise have pending work from previous periods interfering with current deadlines.
    \item \textbf{Phase Boundedness:} By consequence, $\Phi_i < T_i$ and $\Phi_j < T_j$. This constrains the relative phase shifts to the task periods, preventing arbitrary delays that would complicate schedulability.
\end{itemize}

We adopt the following assumptions to abstract the hardware/software interface:
\begin{itemize}
    \item 
    \item \textbf{Deterministic Data Transmission Latency:} 
    We assume a directly connected topology where the transmission time of a result from a task $\tau_i$ to $\tau_j$ is constant and deterministic, given by a latency $L_{ij}$. This abstraction is envisioned to encompass both shared memory and time-sensitive networks (TSN) or switched Ethernet with static forwarding tables in modern CPS\cite{autosar2022timing}.
    \item \textbf{Atomic Execution-Communication interleave:} 
    We model the interleave between processing and communication phases as atomic operations. When a job of task $\tau_i$ completes its computation, it immediately flushes data to the other tasks (which might take up to $L_{ij}$ units of time). This simplifies the interference model by treating the computation finishing time $f_{i,k}$ and communication latency $L_{ij}$ as a contiguous block in the worst-case analysis.
\end{itemize}

\begin{remark} \textbf{Minimum Data Freshness:} $E_{i,j} \geq C_i + L_{ji}$, otherwise, at least for the worst cases of computation and communication latency the data will never be fresh.
\end{remark}

\subsection{Data Age}
Consider $\tau_i \in \mathcal{T}$ and $\tau_j \in \mathcal{T}$, and $\tau_j \in succ(\tau_i)$. To determine the index $n$ of the freshest sample of task $\tau_i$ available for consumption by the $k$-th job of task $\tau_j$, we complement the previous definition with the \textbf{Arrival at Consumer:} $A_{i,n} = f_{i,n} + L_{ij} = (n-1)T_i + \Phi_i + C_i + L_{ij}$. The freshest data available for the consumer is the job $\tau_{i,n}$ that satisfies the condition $A_{i,n} \leq s_{j,k}$. Substituting the definition above:
\begin{equation}
    (n-1)T_i + \Phi_i + C_i + L_{ij} \leq (k-1)T_j + \Phi_j
\end{equation}
Dividing by $T_i$ and solving for $n$:
\begin{equation}
    n \leq \frac{(k-1)T_j + (\Phi_j - \Phi_i - C_i - L_{ij})}{T_i} + 1
\end{equation}
Given that $n$ must be an integer, the index is given by:
\begin{equation}
    n(k) = \left\lfloor \frac{(k-1)T_j + \Phi_j - \Phi_i - C_i - L_{ij}}{T_i} \right\rfloor + 1
\end{equation}
Defining the \textbf{Period Ratio} $\alpha = \frac{T_j}{T_i}$ and the \textbf{Phase Constant} $\beta = \frac{\Phi_j - (\Phi_i + C_i + L_{ij})}{T_i}$, we obtain:
\begin{equation}
    n(k) = \left\lfloor (k-1)\alpha + \beta \right\rfloor + 1
\end{equation}

\textbf{The Period Ratio ($\alpha$)} governs the \textit{rate} of index progression. It dictates how many jobs of $i$ occur per job of $j$. If $\alpha > 1$, the producer is faster; if $\alpha < 1$, the producer is slower. \textbf{The Phase Constant ($\beta$)} governs the \textit{initial alignment}. Given the constraint $\Phi_i + C_i + L_{ij} < T_i$, $\beta$ is bounded such that $\lfloor \beta \rfloor \in \{-1, 0\}$ for synchronous systems ($T_i=T_j$). It essentially shifts the baseline index from which $n$ evolves. Thus, the expected values for $n$ based on the Period ratio are:
\begin{itemize}
    \item \textbf{Synchronous ($\alpha = 1$):} $n$ increments linearly with $k$, with $\beta$ determining the index lag (either $n=k$ or $n=k-1$, depending on $\Phi$). By grounding on the assumptions for \textit{Actuation-Driven Phasing}, $n-=k$ always.
    \item \textbf{Undersampling (Producer is slow, $\alpha < 1$):} The consumer $\tau_j$ executes faster than the producer $\tau_i$. The index $n$ increases slower than $k$. The consumer must reuse the same sample of $\tau_i$ across multiple jobs, leading to data aging, which should be encompassed by a sufficiently large data freshness constraint in the dependency relationship, otherwise, data freshness will never be respected.
    \item \textbf{Oversampling (Producer is fast, $\alpha > 1$):} The consumer $\tau_j$ executes slower than the producer $\tau_i$. The index $n$ increases faster than $k$. The consumer skips intermediate samples of $\tau_i$, selecting the most recent job $n$ to maintain freshness (defined by both $\alpha$ and $\beta$).
\end{itemize}

Considering the definition of $n(k)$ for a given data dependency $(\tau_i,\tau_j) \in \mathcal{E}$, we adopt the following definition of data age for multi-rate chains:
\begin{definition}\textbf{Data Age:} For a job $\tau_{j,k}$ that finishes at $f_{j,k}$, and a job $\tau_{i,n(k)}$, that $\tau_j$ depends on its result, that finishes its execution at $f_{i,n(k)}$, the age of the sample $n(k)$ of $\tau_i$ consumed by $\tau_{j,k}$ is the time elapsed between $\tau_{i,n(k)}$ and $\tau_{j,k}$ completion:
\begin{equation}
    Age_j(\tau_{i,n(k)}) = f_{j,k} - f_{i,n(k)}
\end{equation}
\end{definition}
Therefore, network delays ($L_{ij}$) and consumer scheduling and computation time ($f_{j,m}$) consume the data freshness budget $E_{i,j}$. This definition holds as long as the assumption for \textit{Blocking dependency} holds. We present the following definition for a data freshness compliant real-time application:
\begin{definition}\textbf{Data Freshness Compliance:} Given a real-time application modeled as a DAG $G = (\mathcal{T}, \mathcal{E})$, the system respects data freshness constraints when:
\begin{equation}
    \forall \tau_j \in \mathcal{T}, \forall \tau_i \in \Omega_j ~\land k \in J_j : Age_j(\tau_{i,n(k)}) \leq E_{i,j}
\end{equation}
\end{definition}
Note that, since we only check for the respective $\tau_{i,n(k)}$, we do not care for the average data age of unused samples.

The objective of the proposed Start-Time Optimization is to compute the vector $\mathbf{\Phi} = \{ \Phi_1, \dots, \Phi_n \}$ such that tasks with tight $E_{i,j}$ are delayed (increasing $\Phi_i$) to execute Just-In-Time for their consumers while attending to their deadline constraints as well, thereby minimizing the data age.

% \vspace{1mm}
% \begin{remark}
%     Considering a data dependency $e_{i,j}$, in the worst case $s_{j,k} = D_i-(C_i+C_j+L_{ji})$ and $f_{j,k} = s_{j,k} + C_i$, and $f_{i,k} = D_i$.
% \end{remark}

\section{Adapting Start Offset to Optimize Data Freshness in Task Chains}\label{sec:proposal}

Traditional real-time scheduling focuses on task urgency, where the most constrained tasks (shortest deadlines) are prioritized to execute as early as possible. However, in sensor-fusion chains, earliest deadline first execution is often the primary driver of data staleness. When a high-frequency sensor with strict data freshness completes early, it must wait for high-latency support tasks (e.g., vision processing) to complete before fusion can occur. During this synchronization gap, the fresh data ages, often exceeding its data freshness bound before it is ever consumed.

To address this, we propose a shift toward \textit{Just-In-Time} scheduling. By calculating static release offsets ($\Phi$) based on data freshness constraint rather than task urgency, we can anchor the execution of the entire chain to the slowest bottleneck and delay the sampling of fast sensors. This ensures that all inputs arrive at the fusion node at the peak of their data freshness.

\subsection{Motivating Example: The AEB Sensor Fusion Chain}
To illustrate the impact of this schedule alignment, we analyze a simplified Automated Emergency Braking (AEB) system under both single-core and multi-core scenarios.

\textbf{System Parameters}
\begin{itemize}
    \item \textbf{All tasks follow the same global period ($T_{ctrl}$):} $20ms$, and for simplicity of the example, we assume $L_{imu,ctrl}=L_{vis,ctrl}=1ms$
    \item \textbf{Consumer Task ($\tau_{ctrl}$):} Brake Controller ($C_{ctrl} = 1ms, E_{ctrl} = 20ms$).
    \item \textbf{Input 1 - IMU ($\tau_{imu}$):} Strict Constraint ($C_{imu} = 2ms$, $E_{ctrl~imu} = 5ms$).
    \item \textbf{Input 2 - Vision ($\tau_{vis}$):} Loose Constraint ($C_{vis} = 10ms$, $E_{ctrl~vis} = 20ms$).
\end{itemize}

\textbf{Scenario 1: Single-Core Serialization} \\
In a single-core system, a naive Least Laxity First Scheduler, where priority equals to the freshness constraint, will incur on the following order $\tau_{imu} \to \tau_{vis} \to \tau_{ctrl}$), resulting in $\tau_{imu}$ finishing at $t=2ms$. The data then ages $10ms$ for $\tau_{vis}$ to finish. At the conclusion of $\tau_{ctrl}$, $f_{ctrl,1} = 14ms$), the IMU data age is:
\begin{gather}
    Age_{imu} = f_{ctrl,1} - f_{imu,1} = 14ms - 2ms = 12ms > E_{imu,ctrl}\notag\\\notag \implies \text{\textbf{FAILURE}}
\end{gather}

By inverting this to a JIT-based scheduling ($Order: \tau_{vis} \to \tau_{imu} \to \tau_{ctrl}$), the loose task fills the initial gap. $\tau_{imu}$ finishes at $f_{imu,1}=12ms$ and is consumed at $t=14ms$.
\begin{gather}
    Age_{imu} = 14ms - 12ms = 2ms \leq E_{imu,ctrl} \notag\\ \implies \text{\textbf{SUCCESS}}\notag
\end{gather}

\textbf{Scenario 2: Multi-Core Global Scheduling} \\
On a multi-core platform with at least two cores, a naive Least Laxity First Scheduler releases both sensors at $t=0$. $\tau_{imu}$ finishes on Core 1 at $t=2ms$, while $\tau_{vis}$ finishes on Core 2 at $t=10ms$. Synchronization for $\tau_{ctrl}$ occurs at $t=12ms$:
\begin{gather}
    Age_{imu} = 12ms - 2ms = 10ms > E_{imu,ctrl} \implies \text{\textbf{FAILURE}}\notag
\end{gather}

Our proposed \textbf{Offset-based} approach anchors the chain to the vision bottleneck ($t=11ms+1ms+1ms$, $f_{vis,k}+L_{vis,k}+C_{ctrl}$) and calculates a release offset for the IMU: $\Phi_{imu} = 13ms - 5ms = 8ms$ ($f_{vis,1}-(E_{imu,ctrl})$. By delaying the IMU release until $t=8ms$, it completes at $t=10ms$:
\begin{equation}
    Age_{imu} = 13ms - 10ms = 3ms \leq E_{imu,ctrl} \implies \text{\textbf{SUCCESS}}\notag
\end{equation}

The remainder of this section formalizes the methodology required to implement this data freshness-aware scheduling across Directed Acyclic Graphs (DAGs):
\begin{itemize}
    \item \textbf{Dependency in Task-chains:} Formalizing the relationship between producers and consumers.
    \item \textbf{Demand-Driven Period Derivation:} Determining the optimal $T$ for producers based on consumer frequency.
    \item \textbf{Dominant Path Decomposition:} Identifying the bottleneck paths that dictate the anchor points of the schedule.
    \item \textbf{Task Prioritization (Linear and Non-Dominant):} Proposal for assigning offsets in both simple chains and complex branching structures.
    \item \textbf{Shared Producer (Multi-Rate JIT):} Resolving offset conflicts when a single producer serves multiple consumers with varying rates and data freshness windows.
\end{itemize}

\subsection{Dependency in Task-chains}

In contrast to traditional sensing-driven models, we employ a \textit{Demand-Driven} approach. The temporal attributes of a producer task $\tau_i$ are not arbitrary; they are derived constraints dictated by the requirements of its consumer set $succ(\tau_i)$. We define the derivation rules for the producer's period $T_i$ and required data freshness $E_{i,j}$ as follows:

\textbf{1. Linear Dependency (1:1 Case):}
Consider a simple chain where task $\tau_i$ produces data solely for a single consumer $\tau_j$. The producer would be optimized in terms of processing time used if it could adapt its rate to match the consumption rhythm of the consumer.
\begin{itemize}
    \item \textbf{Derived Period:} To guarantee fresh data availability for every execution of the consumer without wasteful oversampling, the producer can inherit the consumer's demands, either producing a new sample at the consumer rate (i.e., $T_j \geq E_{i,j}$) or according to the data freshness constraint (i.e., when $T_j < E_{i,j}$).
    \begin{equation}
        T_i = T_j%
        %T_i = \max(T_j, E_{i,j})
    \end{equation}
\end{itemize}

\textbf{2. Shared Producer / Divergent Flow (1:N Case):}
Consider a Producer $\tau_i$ that publishes data to a set of multiple consumers $succ(\tau_i) = \{ \tau_1, \dots, \tau_N \}$. The producer must adapt its period to properly attend \textit{all} of its consumers simultaneously.
\begin{itemize}
    \item \textbf{Derived Period (Harmonicity):} To maintain synchronization with multiple consumers, the producer must operate at a rate that is harmonically compatible with the entire consumer set:
    \begin{equation}
        T_i = \text{GCD}\left( \{ T_j \mid \forall \tau_j \in succ(\tau_i) \} \right)
        %T_i = \text{GCD}\left( \{ \max(T_c,E_{i,j}) \mid \forall \tau_j \in succ(\tau_i) \} \right)
    \end{equation}
\end{itemize}

\begin{remark} As $\text{GCD}$ is used to guarantee sampling at a rate that is a common divisor of all periods, the choice of periods and data freshness constraint that are not in phase for multiple actuation sharing the same dependency might thrash system performance.
\end{remark}

\begin{remark}
This backward propagation of periodicity ensures that the Actuator (the end of a data path) drives the timing definition. The Actuator's period $T_{head}$ is determined by the physical plant dynamics, and his rhythm propagates upstream to the sensors in accordance with their data freshness constraint.
\end{remark}

\subsection{Dominant Path Decomposition}
We define the dominance relationships based on the \textit{data freshness} constraints for dominant input selection, and \textit{Period} ($T$) for output selection.

\begin{definition}[Critical Predecessor ($previous(\tau_i)$)]
    Identifies the dominant dependency for task $\tau_i$. The dominant predecessor is the task $\tau_j$ whose data has the strictest data freshness constraint (smallest data freshness window) imposed by $\tau$. This defines the critical data freshness path.
    \begin{equation}
        previous(\tau_i) = 
        \begin{cases} 
            \underset{\tau_j \in pred(\tau_i)}{\arg\min} \{ E_{i,j} \} & \text{if } pred(\tau) \neq \emptyset \\
            \emptyset & \text{if } pred(\tau_i) = \emptyset
        \end{cases}
    \end{equation}
\end{definition}

We define the \textbf{Dominant Task Chain} as a path through the Task Dependency Tree (TDT) rooted at a specific Actuator Task, denoted in this TDT as $\tau_{0}$, where every node $\tau_i$ in the path is the critical predecessor of the previous node $\tau_{i-1}$ in the path (i.e., $\tau_i = previous(\tau_{i-1})$).

\subsection{Task Prioritization - Case 1: Linear Task Chain}
For a linear task chain $\mathcal{P}$ executing on a single processor, the explicit modeling of data dependencies can be simplified. We reduce the precedence-constrained scheduling problem to a standard \textbf{Earliest Deadline First (EDF)} problem by deriving \textit{Effective Deadlines} for each task, modeled as the priority in our task model, e.g., $P_i = D^{eff}_i$.

In a linear chain, a predecessor $\tau_i$ must finish early enough to allow all subsequent tasks $\{\tau_{i+1}, \dots, \tau_{N}\}$ to execute and communicate before the global chain deadline. Therefore, the absolute deadline of a task is not its period $T_i$, but a stricter constraint derived from the demand of its successors.

Let the task chain consist of $N$ tasks, where $\tau_1$ is the chain head (e.g., an actuator task with no consumers) and $\tau_N$ is the last task in the chain (e.g., a sensor task with no dependencies).

\textbf{1. Base Case (Chain Head):}
The actuator must complete by the end of the period.
\begin{equation}
    D^{eff}_1 = T_1
\end{equation}

\textbf{2. Recursive Step (Upstream Propagation):}
For any task $\tau_i$ ($1 < i < N$), the deadline is the effective deadline of its successor $\tau_{i+1}$ minus the execution time of the successor and the communication latency between them.
\begin{equation}
    D^{eff}_i = D^{eff}_{i-1} - C_{i-1} - L_{i, i-1}
\end{equation}

\textbf{3. General Closed Form:}
Substituting recursively, the effective deadline for any task $\tau_i$ is the global period reduced by the cumulative worst-case execution and transmission time of the downstream path:
\begin{equation}
    D^{eff}_i = T - \sum_{j=i-1}^{1} \left( C_j + L_{j-1, j} \right)
\end{equation}

By assigning these derived deadlines, we guarantee that:
\begin{equation}
    D^{eff}_1 < D^{eff}_2 < \dots < D^{eff}_N
\end{equation}
When scheduled under EDF:
\begin{itemize}
    \item At the start of the period ($t=0$), all tasks are released.
    \item The scheduler selects the task with the smallest deadline, which is always the task with no dependencies $\tau_N$.
    \item Once $\tau_N$ completes, $\tau_{N-1}$ becomes the runnable task with the earliest deadline.
    \item This naturally enforces the topological order $\tau_N \to \tau_1$ purely through deadline priorities, satisfying the dependency constraints without explicit synchronization primitives.
\end{itemize}

Finally, data freshness is encompassed by choosing the minimum between the effective deadline calculated and the data freshness constraint issued from $\tau_{i-1}$ to its predecessor task $\tau_i$:
\begin{definition}[Data Freshness Adjustment]
If the data freshness constraint $E_{i,i-1}$ is tighter than the structural slack, the deadline must be further clamped. The generalized effective deadline is:
\begin{equation}
    D^{eff}_i = \min \left( D^{eff}_{i-1} - C_{i-1} - L_{i, i-1}, \quad r_i + E_{i,i-1} \right)
\end{equation}
\end{definition}

\subsubsection{Schedulability in Global EDF}
Standard Global EDF proofs rely on the property that the scheduler is \textit{work-conserving} (a core is never idle if a task is ready). A suspended task is "open" but not "ready," creating an anomaly where the processor idles despite pending workload.
\textbf{Consequence:} It is known that self-suspensions induce a "jitter" effect that is harder to schedule than pure computation. The standard G-EDF schedulability test fails because:
\begin{equation}
    \text{Demand}(t) + \text{SuspensionGap}(t) > \text{Supply}(t)
\end{equation}
The parameter $t \in \mathbb{R}^+$ denotes a continuous time interval. For any task $\tau_i$, the demand $Demand(\tau_i, t)$ is the sum of the execution times of all jobs of $\tau_i$ whose release times and absolute deadlines lie within $[0, t]$, while $Supply(t) = m \cdot t$, where $m$ is the number of cores.

The total system demand over interval $t$ is then:
\begin{equation}
    \sum_{\tau_i \in \mathcal{T}} \left( \left\lfloor \frac{t - \Phi_i}{T_i} \right\rfloor + 1 \right) \cdot C_i + \text{SuspensionGap}(t)
\end{equation}
Where the \textbf{SuspensionGap} is the sum of all idle intervals on $m$ cores where no critical task can be scheduled due to data freshness-imposed offsets.

To restore the validity of the EDF proof, we must move from "Independent Task Analysis" to \textbf{Transaction-Based Analysis} employing the offsets $\Phi_i$. %Note that this is only a logical artifact to evaluate schedulability of the system considering worst-case analysis.%, and the real system will not suffer from these offsets and will idle the CPU instead.

Since we are considering a simple Linear Task Chain, we enforce that each task is only released when the previous one completes (considering its WCET and communication latency), i.e., $\Phi_{i-1} = C_{i}+L_{i,i-1}$. By enforcing this release offset, we ensure that:
\begin{enumerate}
    \item \textbf{Independence Restoration:} When $\tau_{i-1}$ is finally released at $t = \Phi_{i-1}$, the data is already available. $\tau_{i-1}$ never waits.%, i.e., $C'_{cons} = C_{cons}$.
    \item \textbf{Validity of EDF:} The demand of $\tau_{i-1}$ is shifted in time. It contributes to the system load only over the interval $[\Phi_{i-1}, D_{i-1}]$.
\end{enumerate}

The schedulability check must be updated to an \textbf{Offset-Aware Test}:
\begin{equation}
    \forall t > 0, \quad \sum_{\tau_i \in \mathcal{T_i}} \max\left(0, \left\lfloor \frac{t - \Phi_i}{T} \right\rfloor + 1 \right) \cdot C_i \leq m \cdot t
\end{equation}
where $m$ is the number of processors. The test must hold for all $t > 0$ to ensure the system is schedulable under all conditions. In practice, $t$ is checked at task arrivals and deadlines up to the Least Common Multiple (LCM) of all task periods. The term $\lfloor \frac{t - \Phi_i}{T_i} \rfloor + 1$ counts how many instances (jobs) of task $\tau_i$ have been released within the window $t$, accounting for the delay introduced by the offset $\Phi_i$. This formula correctly accounts for the fact that downstream tasks cannot demand CPU time at the beginning of the period.

\subsection{Task Prioritization - Case 2: Handling Non-Dominant Branches (single consumer, multiple producers)}

Since we are not dealing with a linear task-chain anymore, we will adopt the notation $\tau_{cons}$ to represent the consumer task, and $\tau_{prod_k}$ to represent the $k$-th task $\tau_{cons}$ depends on.
When a consumer task $\tau_{cons}$ requires synchronized data from multiple predecessor branches (e.g., multiple inputs), the schedule must ensure that \textit{all} input data remains fresh up to the conclusion of the $k$-th job of the consumer, i.e., $f_{cons,k}$.

This problem is universally solved by identifying the system's temporal \textbf{Anchor} (latest arriving input) and aligning all other branches relative to it. This approach applies equally to single-core (ordering) and multi-core (offset) scenarios.

By the assumptions assumed in this paper, a consumer task $\tau_{cons}$ cannot start until all of its inputs are available (and its results are only valid if the inputs were fresh). The release of the $k$-th job of a task $\tau_{cons}$, $r_{cons,k}$, is determined by the "Bottleneck Branch", the predecessor path with the longest latency (sum of $\Phi + C + L$ for all tasks in the path). The anchor time relative to the period start of the $k$-th job of a task $\tau_{cons}$ is then the sum of the start time with the worst-case execution time:
\begin{equation}
    f_{cons,k} = \max_{\tau_{prod_i} \in pred(\tau_{cons})} \left(\Phi_i + C_i + L_{i, cons} \right) + C_{cons}
\end{equation}
By applying this rule iteratively, from the nodes that depends on tasks with no dependencies, upstream, we obtain calculate the worst-case start time for all nodes in the dependency tree.

Once the anchor $f_{cons,k}$ is established for a task $\tau_{cons}$, we calculate the required timing for every other predecessor $\tau_{prod_i}$ to ensure its data does not expire before $f_{cons,k}$, given as the \textbf{Latest Safe Start Time for this consumer ($LST_{prod_i,cons}$)}:
\begin{equation}
    LST_{prod_i,cons} = f_{cons,k} - E_{prod_i,cons}
\end{equation}

The scheduler's goal is to align the start time $s_{prod_i,k}$ with $LST_{prod_i,cons}$ as close as possible. The gap between the earliest possible start ($0$) and the required start ($LST_{prod_i,cons}$) represents the \textbf{Temporal Slack}.

How we utilize this slack depends on the hardware resource constraints, but the mathematical target ($LST_{prod_i,cons}$) remains constant.
\subsubsection{Case 2.1: Global Scheduling (Multi-Core)}
\begin{itemize}
    \item \textbf{Mechanism:} \textit{Explicit Offset Injection.}
    \item Since cores are available in parallel, we simply delay the release of the fast/strict task $\tau_{prod_i}$ until the last safe moment to minimize data age and buffer residence.
    \begin{equation}
        \Phi_{prod_i} = LST_{prod_i,cons}
    \end{equation}
    \item \textbf{Result:} The task sleeps during the slack time, saving system bandwidth or allowing lower-priority background tasks to run. The effective deadline of these task is then set to $D^{eff}_{prod_i} = f_{cons,k}$, the anchor.
\end{itemize}

\subsubsection{Case 2.2: Serial Scheduling (Single-Core)}
\begin{itemize}
    \item \textbf{Mechanism:} \textit{Effective Deadline Ordering.}
    \item To ensure the most fragile data (tight data freshness) is generated last (immediately before consumption) we treat the data freshness constraints as a \textbf{Virtual Dependency Chain}.
    \item We conceptually order the priority by data freshness, the loose is the constraint (higher data freshness value) the highest is the priority (executes first), i.e., ($\tau_{N} \to \ldots \to \tau_{2} \to \tau_{1}$), where $\tau_1 = \tau_{cons}$, and $E_{\tau_{i+1},\tau_1} \geq  E_{\tau_{i},\tau_1}$.
    \item \textbf{Deadline Derivation:} Using our recursive effective deadline definition:
    \begin{gather}
        P_i = D^{eff}_{i} = D^{eff}_{i-1} - C_{i}\text{, with}\\
        P_1 = D^{eff}_{1} = T_1
    \end{gather}
    \item \textbf{Result:} Since $D^{eff}_{N} < D^{eff}_{N-1}$ then $E_{\tau_{N},\tau_1} \geq  E_{\tau_{N-1},\tau_1}$, an EDF scheduler naturally selects the Loose task first. It executes earlier in the frame, allowing the Strict task to execute later (closer to the deadline), thereby minimizing the age of the strict data at the moment of consumption.
\end{itemize}

\begin{remark}
In both cases, the \textbf{Offset-Aware Schedulability Test} validates the system.
\begin{itemize}
    \item For Multi-Core, $\Phi_{prod_i}$ reduces the interference window, and when combined with the effective deadline set to the start of the next task, allows for ordering even in G-EDF.
    \item For Single-Core, the ordering ensures that no task effectively "blocks" another beyond the allowable data freshness window.
\end{itemize}
\end{remark}

\subsection{Shared Producer: Multi-Rate JIT Scheduling}

We consider a shared producer $\tau_{shared}$ feeding multiple consumer chains $succ(\tau_{shared}) = \{ \tau_1, \dots, \tau_N \}$. A conflict arises when a static offset $\Phi_{shared}$ chosen to satisfy a strict consumer (requires freshest samples, leading to late-frame production) causes the task to finish after the consumption window of a less strict consumer, forcing it to consume stale data. Formally, if for a consumer $\tau_i \in succ(\tau_{shared})$, $\Phi_i < \Phi_{shared} + C_{shared}$, the consumer $\tau_i$ attempts to execute before $\tau_{shared}$ has produced new data. It is forced to read the data from a previous job, which is likely expired ($Age \approx T_{shared} + \Phi_{shared} + C_{shared}$), leading to a system-wide deadline or freshness failure.

To define the validity windows, we establish the following temporal anchors:
\begin{itemize}
    \item $f_{i,1}$: The finish time of the first job ($k=1$) of consumer $\tau_i$. The algorithm solves the problem for the first job, generalizing the solution to other jobs as the system is periodic, simplifying the interval verifications.
    \item $LB_i = f_{i,1} - E_{shared,i}$: the earliest time (lower bound) a data sample can be produced while remaining fresh until the consumer's execution.
    \item $UB_i = f_{i,1} - C_{shared} - L_{shared,i} - C_i$: the latest time (upper bound) a data sample can be produced to ensure it is finished and transmitted in time for the consumer's execution.
\end{itemize}

Our objective is to identify a static global offset $\Phi_{shared}$ that satisfies all consumer constraints via \textbf{Static Offset Propagation}, an algorithm that searches for a single offset that satisfies all consumers throughout multiple jobs of $\tau_{shared}$ within the consuming interval of consumers.

\begin{remark} \textbf{The Cascade Effect and Complexity} Adjusting $\Phi_{shared}$ to satisfy one consumer may inadvertently invalidate the freshness constraints of another. Trying to recursively adjust anchors in the critical path of other consumers might cause a \textbf{Cascade Effect}. Because each adjustment to $\Phi_{shared}$ can recursively necessitate a re-synchronization of all predecessors in the chain, the search space for a global static offset grows exponentially. Finding an optimal $\Phi_{shared}$ that avoids this recursive cascade, or proving no such offset exists, is a combinatorial optimization problem.
\end{remark}

When a single $\Phi_{shared}$ is insufficient to satisfy the intersection of all freshness windows $\bigcap_{i} [LB_i, UB_i]$, we apply \textbf{Hyperperiod Decomposition}. We decompose $\tau_{shared}$ into $N = HP_{shared}/T_{shared}$\footnote{$HP_{shared} = LCM_(T_i ~\forall \tau_i \in succ(\tau_{shared}))$} sub-tasks, each with a unique fixed offset $\Phi_{shared,k}$ tailored to the consumer active in that hyperperiod slot. This preserves EDF schedulability by maintaining strict periodicity without job-specific offsets, while avoiding the combinatorial complexity.

Algorithm~\ref{alg:consensus} employs back-propagation to determine a valid $\Phi_{shared}$. The consensus algorithm navigates the search space of $\Phi_{shared}$ by iteratively validating the availability of data samples for every consumer $\tau_i$.

\begin{enumerate}
    \item \textbf{Initialization:} We start with an optimistic offset based on the most constrained path (the highest frequency consumer).
    \item \textbf{Validation:} For every consumer $\tau_i$, the algorithm generates a set of potential release times $\mathcal{I}$ for all jobs of $\tau_{shared}$ that could logically be consumed within the consumer's period.
    \item \textbf{Back-Propagation and Adjustment:} If no sample $t \in \mathcal{I}$ falls within the required freshness window $[LB_i, UB_i]$, a conflict is declared. The algorithm employs a \textit{back-propagation} step: it adjusts $\Phi_{shared}$ to "push" the producer's execution closer to the consumer's start time.
    \item \textbf{Rollback Logic:} If the adjustment forces the offset beyond $T_{shared}$, it implies no single static offset can satisfy the global constraints. At this point, the algorithm rolls back and triggers a failure signal, which initiates the \textbf{Hyperperiod Decomposition} to ensure schedulability.
\end{enumerate}

\begin{algorithm}
\caption{Shared Producer Consensus Search}\label{alg:consensus}
\begin{algorithmic}[1]
\Require $\tau_{shared}$, $succ(\tau_{shared})$
\Ensure Static Offset $\Phi_{shared}$ or Failure
\State $ord \gets \text{sort } succ(\tau_{shared}) \text{ by ascending } T_i$
\State $\Phi_{shared} \gets f_{ord[0], 1} - E_{shared,ord[0]}$
\While{not Success}
    \State $ConflictFound \gets \text{False}$
    \For{$\tau_i \in ord$}
        \State $LB_i \gets f_{i,1} - E_{shared,i}$ 
        \State $UB_i \gets f_{i,1} - C_{shared} - L_{shared,i} - C_i$ 
        \State $\mathcal{I} \gets \{j \cdot T_{shared} + \Phi_{shared} \mid j \in [0, \frac{T_i}{T_{shared}}-1]\}$
        \If{$\nexists t \in \mathcal{I} \mid t \in [LB_i, UB_i]$}
            \State $\Phi_{shared} \gets \Phi_{shared} + 1$
            \State $ConflictFound \gets \text{True}, \textbf{break}$
        \EndIf
    \EndFor
    \If{\textbf{not} $ConflictFound$}
        \State $Success \gets \text{True}$
    \EndIf
    \If{$\Phi_{shared} > T_{shared}$}
        \State \Return \textit{Failure: Invoke Hyperperiod Decomposition}
    \EndIf
\EndWhile
\State \Return $\Phi_{shared}$
\end{algorithmic}
\end{algorithm}

The increment $\Phi_{shared} \gets \Phi_{shared} + 1$ ensures a monotonic search toward the upper bound $UB_i$. Given that the freshness constraint $Age \leq E_{ij}$ is monotonically decreasing with respect to $\Phi_{shared}$ (within the interval $[LB_i, UB_i]$), the algorithm converges to the supremum of the feasible completion time, effectively minimizing the consumed data age.

\subsection{Consistency Preservation and Hyper-period analysis}
We ensure \textbf{Phase-Shift Invariance}: if $\Phi_{shared}$ acts as an anchor for $\tau_i$, we shift all other producers of $\tau_i$ by the same $\Delta$ to preserve relative data age. If $\Phi_{shared}$ is not the anchor, offsets remain unchanged, preserving path-specific synchronization. Since $\mathcal{I}$ considers all jobs of $\tau_{shared}$ that could be consumed by $\tau_{i}$ ($\mathcal{I} =[0,\frac{T_i}{T_{shared}}]$), if we end up with $\Phi_{shared} \geq T_{shared}$, there is no offset that would satisfy all consumers. Therefore, an hyperperiod decomposition is required to enforce different offsets to different jobs of $\tau_{shared}$.

\subsubsection{Hyperperiod Decomposition}
In cases where no single $\Phi_{shared}$ satisfies the intersection of all consumer's data freshness windows ($\bigcap_{i} [LB_i, UB_i]$), the task $\tau_{shared}$ is decomposed into $N = HP_{shared}/T_{shared}$ distinct periodic tasks. Each task $\tau_{shared,k} \mid 1 < k < N$ is assigned a unique fixed offset $\Phi_{shared,k}$ tailored to the specific consumer active in that hyperperiod slot. This preserves the periodicity required for the EDF Schedulability Test while satisfying heterogeneous data freshness requirements.

\section{Proof of Schedulability Capacity Preservation and Data Age Minimization}\label{sec:proof}
This section presents the proof that our offset based solution preserves EDF Schedulability Capacity, which is done through demand bound function \cite{Bini:2010} and properly handling the \textit{work-conserving property} when tackling self-suspension of tasks, and the data age minimization.

\subsection{Theorem: Asynchronous Dominance}
\begin{theorem}
Let $\mathcal{T} = \{\tau_1, \dots, \tau_n\}$ be a set of periodic tasks with implicit deadlines $D_i = T_i$. The introduction of static offsets $\Phi_i \geq 0$ to satisfy data freshness constraints does not reduce the maximum schedulable utilization bound of the system under Global EDF.
\end{theorem}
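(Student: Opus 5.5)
The plan is to reduce the offset-augmented system to an ordinary set of \emph{independent} asynchronous periodic tasks. I would then compare its demand with that of the synchronous system (all $\Phi_i=0$) using the demand bound function of \cite{Bini:2010}. The statement concerns a \emph{utilization bound}, so it suffices to show two things. First, every sufficient Global EDF test valid for the synchronous/sporadic task set remains valid once offsets are added. Second, the necessary condition $U=\sum_i C_i/T_i \leq m$ is unaffected by the offsets.

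\textbf{Step 1: restoring work-conservation.} I would first argue that under the offset assignment of Section~\ref{sec:proposal} no job ever self-suspends. By construction, $\Phi_j \geq \Phi_i + C_i + L_{ij}$ along the dominant path (anchor rule, $LST$ rule, and Algorithm~\ref{alg:consensus}). Hence, whenever the $k$-th job of $\tau_j$ is released, the input sample $n(k)$ of each producer has already arrived, i.e.\ $A_{i,n(k)} \leq r_{j,k}$. The blocking-dependency semantics then imposes no additional wait, so the \textbf{SuspensionGap}$(t)$ term vanishes identically. The system behaves as $n$ independent periodic tasks $(C_i,T_i,\Phi_i)$ under a work-conserving Global EDF. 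The Well-Behaved Pipeline assumption ($\Phi_i + C_i + L_{ij} < T_i$) ensures that every job window $[r_{i,k}, r_{i,k}+T_i)$ is a plain translate of the synchronous one. It also ensures that no backlog crosses period boundaries.

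\textbf{Step 2: demand dominance.} For any interval $[t_1,t_2)$ of length $\ell$, the number of jobs of $\tau_i$ with both release and absolute deadline inside the interval is at most $\lfloor \ell/T_i \rfloor$, whatever the value of $\Phi_i$. Summing over tasks gives
\begin{equation}
\sup_{t_1}\, \mathrm{Demand}^{\Phi}([t_1,t_1+\ell)) \;\leq\; \sum_i \left\lfloor \frac{\ell}{T_i}\right\rfloor C_i \;=\; \mathrm{dbf}^{0}(\ell).
\end{equation}
This shows that the Offset-Aware Test is implied by the synchronous test. Moreover, any Global EDF sufficient condition derived for sporadic tasks, which must cover every release pattern, applies verbatim, because offset-periodic release is one such pattern. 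Consequently, any utilization bound guaranteed by such a test for the synchronous set also guarantees the offset set.

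\textbf{Step 3: necessity side and decomposition.} Over one hyperperiod $H$, the offset system still generates exactly $H\,U$ units of work. The necessary condition $U \leq m$ is therefore identical, so the schedulable utilization region is neither shrunk nor altered. For Hyperperiod Decomposition, the $N=HP_{shared}/T_{shared}$ subtasks each have period $HP_{shared}$ and cost $C_{shared}$. Their total utilization is $N C_{shared}/HP_{shared} = C_{shared}/T_{shared}$, and each subtask is strictly periodic, so Steps 1--2 apply to them unchanged.

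\textbf{Main obstacle.} The delicate point is Step~1 combined with the effective deadlines $D^{eff}_i < T_i$ used in Section~\ref{sec:proposal}. These turn the problem into a constrained-deadline one, whose density exceeds its utilization, and that is in tension with the implicit-deadline hypothesis of the theorem. I would handle this by treating $D^{eff}_i$ only as a priority key, with the actual feasibility deadline remaining $T_i$. Under that reading, the job windows stay $[r_{i,k}, r_{i,k}+T_i)$ and Step~2 holds. I would also state explicitly that ``capacity'' refers to the bound certified by the (offset-aware) demand test, and not to exact optimality of Global EDF on $m>1$ cores.
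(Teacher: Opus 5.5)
Your Steps~2--3 are a sound argument for the theorem as literally stated, but their key lemma is not the paper's. The paper compares origin-anchored curves, $DBF_{async}(\tau_i,t)\le DBF_{sync}(\tau_i,t)$, and notes that $\lim_{t\to\infty}DBF(\tau_i,t)/t=C_i/T_i$ is unchanged. It treats $\Phi_i$ as a fixed release delay with density $C_i/T_i$ preserved. From this it concludes that any set schedulable at $U\le m$ without offsets stays schedulable with them. You instead bound demand over every window $[t_1,t_1+\ell)$, which is the right quantity: with offsets the worst window starts at $\Phi_i$, not at $0$. More importantly, you observe that offset-periodic release is one admissible sporadic pattern, so every sporadic Global EDF sufficient test (e.g.\ $U\le m-(m-1)u_{\max}$) transfers verbatim.

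That observation buys real rigor. For $m>1$, neither $U\le m$ nor $\sum_i\mathrm{dbf}_i(t)\le mt$ certifies Global EDF schedulability (Dhall effect). Synchronous release is also not, in general, the worst case for global scheduling. So the paper's per-task-set implication does not follow from its demand comparison, whereas your class-level transfer is exactly what a utilization-bound statement needs. The price is that you preserve the bound certified by sporadic tests rather than a literal $100\%$. Accordingly, in your closing sentence tie ``capacity'' to those tests, not to the offset-aware demand test, which is only a necessary condition.

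Some auxiliary claims do not hold. Drop them in favor of the theorem's literal hypothesis, as the paper does.

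\textbf{Step~1.} $A_{i,n(k)}\le r_{j,k}$ follows from $\Phi_j\ge\Phi_i+C_i+L_{ij}$ only if each producer job completes exactly $C_i$ after its release. Under Global EDF, a producer released at $\Phi_i$ can wait behind $m$ earlier-deadline jobs, so its response time can approach $T_i$, and a blocking consumer then does suspend. The suspension gap therefore does not vanish by construction; you would need gaps of $R_i+L_{ij}$, with $R_i$ a response-time bound. Moreover, the $LST$ rule itself yields $\Phi_j\ge\Phi_i+C_i+L_{ij}$ only when $E_{i,j}\ge C_i+L_{ij}+C_j$. Since the theorem concerns independent periodic tasks with static offsets, you need no such reduction at all.

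\textbf{Effective deadlines.} Using $D^{eff}_i$ as a priority key while keeping deadlines $T_i$ means the scheduler is no longer EDF on the true deadlines, so Step~2's verbatim transfer breaks. On one core, take $\tau_1=(C=1,T=2)$ and $\tau_2=(C=2.5,T=5)$ with priority deadline $2.5$ for $\tau_2$. Then $U=1$, yet $\tau_2$ occupies $[1,3.5]$ and the second job of $\tau_1$ finishes at $4.5>4$. The consistent choice is the paper's: keep $D_i=T_i$ as both deadline and priority (its density-invariance point), and leave $D^{eff}$ outside the theorem.

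\textbf{Decomposition.} Your utilization accounting holds only if each subtask carries the implicit deadline $HP_{shared}$. If sub-jobs keep deadline $T_{shared}$, consecutive sub-jobs can be released less than $T_{shared}$ apart, and the Step~2 count fails.
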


\subsubsection{Proof by Demand Bound Function (DBF)}
The Demand Bound Function $DBF(\tau_i, t)$ defines the maximum execution requirement of task $\tau_i$ in any interval of length $t$. For a synchronous task ($\Phi_i = 0$):
\begin{equation}
    DBF_{sync}(\tau_i, t) = \max \left( 0, \left\lfloor \frac{t - D_i}{T_i} \right\rfloor + 1 \right) \cdot C_i
\end{equation}

When an offset $\Phi_i$ is introduced, the first job is delayed, shifting the demand curve to the right:
\begin{equation}
    DBF_{async}(\tau_i, t) = \max \left( 0, \left\lfloor \frac{t - (\Phi_i + D_i)}{T_i} \right\rfloor + 1 \right) \cdot C_i
\end{equation}

\textbf{Step 1: Interval Dominance.}
For any $t > 0$, since $\Phi_i \geq 0$, it follows that $t - (\Phi_i + D_i) \leq t - D_i$. Consequently, $DBF_{async}(\tau_i, t) \leq DBF_{sync}(\tau_i, t)$. Because the individual task demand is non-increasing, the total system demand $\sum DBF_{async}(t)$ does not exceed the synchronous demand bound. While asynchronous arrival patterns may change the distribution of peak demand, they do not increase the total workload density over the hyperperiod.

\textbf{Step 2: Utilization Invariance.}
The system utilization $U$ is defined by the long-term work density:
\begin{equation}
    U = \sum_{i=1}^{n} \lim_{t \to \infty} \frac{DBF(\tau_i, t)}{t} = \sum_{i=1}^{n} \frac{C_i}{T_i}
\end{equation}
As $\Phi_i$ is constant, $\lim_{t \to \infty} \frac{DBF_{async}(t)}{t} = \lim_{t \to \infty} \frac{DBF_{sync}(t)}{t}$. Thus, $U_{async} = U_{sync}$.

\subsubsection{Handling the "Self-Suspension" Artifact}
Unlike dynamic self-suspension, where tasks block themselves based on internal state changes, our $\Phi_i$ represents a \textit{fixed release delay}.
\begin{itemize}
    \item \textbf{Work-Conserving Property:} Under G-EDF, the scheduler is work-conserving. If core $m_j$ is idle during $[0, \Phi_i]$, the resource remains available to other ready tasks.
    \item \textbf{Density Invariance:} We preserve the relative deadline $D_i = T_i$. Because both release times and deadlines are shifted by $\Phi_i$, the task density $\delta_i = C_i/D_i$ remains invariant.
\end{itemize} 

\subsubsection{Proof Conclusion}
Since the total demand bound is non-increasing and utilization remains invariant, any task set schedulable at $U \leq m$ without offsets remains schedulable with freshness-aware offsets.\hfill $\square$

\subsection{Proof of Data Age Minimization}
The algorithm exhibits a greedy convergence property. By initializing at the lower bound $LB_i$ and incrementing $\Phi_{shared}$ until the feasibility condition $f_{shared} \in [LB_i, UB_i]$ is met, the search path is strictly monotonic. Consequently, the first feasible $\Phi_{shared}$ encountered is the value that minimizes the distance to $UB_i$, satisfying the supremum condition for the freshness budget.

\begin{theorem}
Given a producer $\tau_i$ and consumer $\tau_j$, the offset $\Phi_i$ synthesized by the consensus algorithm minimizes the age of the consumed result: $Age_{optimized} \leq Age_{default}$.
\end{theorem}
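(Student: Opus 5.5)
The plan is to compare the data age under the synthesized offset against the default (synchronous, $\Phi_i = 0$) release, using the closed form for data age given by the freshest-sample index $n(k)$.

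\textbf{Step 1: write the age as a function of $\Phi_i$.} Fix the consumer job $\tau_{j,k}$, so its finish time $f_{j,k}$ does not depend on $\Phi_i$. By the Data Age definition and the Arrival at Consumer expression,
\[
Age_j(\tau_{i,n(k)}) = f_{j,k} - \bigl((n(k)-1)T_i + \Phi_i + C_i\bigr),
\]
with $n(k) = \lfloor (k-1)\alpha + \beta \rfloor + 1$ and $\beta$ depending on $\Phi_i$ through $-\Phi_i/T_i$.

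\textbf{Step 2: show monotonicity on the feasible set.} Define the feasible set $F = \{\Phi_i : A_{i,n(k)} \le s_{j,k}\ \text{for all } k\}$, that is, offsets for which the intended sample arrives in time. Within any interval of $\Phi_i$ on which $n(k)$ is constant, the age is strictly decreasing in $\Phi_i$ with slope $-1$. When $\Phi_i$ crosses a point at which $n(k)$ drops by one, the age jumps up by $T_i$. The step I expect to be the main obstacle is to show that the search never crosses such a drop. I would argue this from the loop condition in Algorithm~\ref{alg:consensus}: a sample $t \in \mathcal{I}$ is accepted only if $t \le UB_j = f_{j,1} - C_i - L_{ij} - C_j$. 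This guarantees that the accepted job still arrives before the consumer starts, so $n(k)$ is unchanged on the interval the algorithm traverses. Periodicity, together with the Well-Behaved Pipeline assumption ($\Phi_i + C_i + L_{ij} < T_i$, which forces $\lfloor\beta\rfloor \in \{-1,0\}$), then extends the first-job argument to every $k$.

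\textbf{Step 3: locate the default and the optimized offsets.} The default case $\Phi_i = 0$ is feasible, since it is the ASAP release, and it lies at or below the initial value $LB$ used by the search. The algorithm increments $\Phi_i$ monotonically and stops at the first offset whose sample time lies in $[LB_j, UB_j]$. By the greedy convergence argument stated just before the theorem, this stopping point is no smaller than the default offset and stays inside the same constant-$n(k)$ interval.

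\textbf{Step 4: conclude.} Because age is non-increasing in $\Phi_i$ on that interval (Step 2) and $\Phi_i^{opt} \ge \Phi_i^{default}$ (Step 3), we get $Age_{optimized} \le Age_{default}$. If the search fails, Hyperperiod Decomposition applies the same argument per sub-task $\tau_{shared,k}$, each with its own fixed offset, so the inequality holds job by job.
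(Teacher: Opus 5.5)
Your proposal has a genuine gap. Step 4 rests on two facts: $\Phi_i^{\mathrm{opt}}\ge\Phi_i^{\mathrm{default}}=0$, and an unchanged index $n(k)$ between the two offsets. Both are asserted rather than derived, and neither follows from Algorithm~\ref{alg:consensus}. On the first, the greedy-convergence remark only says the search moves upward from $LB_j=f_{j,1}-E_{i,j}$. Nothing in the model forces $E_{i,j}\le f_{j,1}$, and Actuation-Driven Phasing even invites negative-timeline sampling, so $LB_j$ can be negative. For a single consumer the search then stops on its first pass, because the $j=0$ element $\Phi_i=LB_j$ already lies in $[LB_j,UB_j]$ whenever that window is nonempty. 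This gives $\Phi_i^{\mathrm{opt}}=LB_j<0$, and your monotonicity points the wrong way. Concretely, take $T_i=T_j=20$, $C_i=2$, $L_{ij}=C_j=1$, $s_{j,1}=10$, $E_{i,j}=12$. Then $LB_j=-1$ and $UB_j=7$, the search returns $\Phi_i=-1$, and the consumed sample has age $11-1=10$, against $11-2=9$ at $\Phi_i=0$.

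On the second, the loop test certifies only that, at the returned offset, some sample arrives before $s_{j,1}$. Since $n(k)$ is non-increasing in $\Phi_i$, what you need is the converse: at the smaller default offset, no \emph{later} sample also arrives in time. The test says nothing about that. For $T_i=T_j$ it can be recovered, since the second job at $\Phi_i=0$ arrives at $T_i+C_i+L_{ij}>s_{j,1}$. However, $\lfloor\beta\rfloor\in\{-1,0\}$ holds only in that synchronous case, and for a shared producer with $T_i<T_j$ the claim fails. Take $T_i=10$, $T_j=30$, $C_i=L_{ij}=C_j=1$, $s_{j,1}=23$, $E_{i,j}=22$; optionally add a period-$10$ consumer with window $[2,3]$ so that $T_i$ is the GCD. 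Then $LB_j=2$, $UB_j=21$, $\mathcal{I}=\{2,12,22\}$ meets the window immediately, and $\Phi_i=2$ is returned. The freshest in-time sample is the job released at $12$, with age $24-13=11$. At $\Phi_i=0$, the job released at $20$ arrives at $22$ and has age $3$. So your argument is sound only under added hypotheses, roughly $LB\ge 0$ and $T_i=T_j$.

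The paper takes a different route. It asserts that the search places $f_{i,n(k)}$ at the supremum of $[LB_j,UB_j]$, so any feasible baseline (having $f\le UB_j$) is dominated directly, with no comparison of offsets or indices. But the algorithm returns the first admissible offset reached upward from its initial lower bound, i.e.\ the one nearest $LB$ rather than $UB$. That claim therefore cannot be imported to close these gaps.
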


\textit{Proof:}
\begin{itemize}
    \item \textbf{Age Definition:} $Age_j(\tau_{i,n(k)}) = f_{j,k} - f_{i,n(k)}$. Minimizing age requires maximizing the producer's completion time $f_{i,n(k)}$.
    \item \textbf{Constraint:} $f_{i,n(k)}$ must reside in $\mathcal{I} = [LB_i, UB_i]$, where $UB_i = s_{j,k} - (L_{ij} + C_j)$.
    \item \textbf{Optimization:} Our algorithm selects $\Phi_i$ to set $f_{i,n(k)} = \sup \{ t \in \mathcal{I} \mid t \leq UB_i \}$.
    \item \textbf{Conclusion:} For any baseline $\Phi_{default}$, $f_{baseline} \leq UB_i$. Since the optimized $f_{optimized}$ is the supremum, $f_{baseline} \leq f_{optimized} \leq UB_i$. Thus, $f_{j,k} - f_{optimized} \leq f_{j,k} - f_{baseline}$, yielding $Age_{optimized} \leq Age_{default}$. \hfill $\square$
\end{itemize}

\section{Conclusion}\label{sec:conclusion}

The evolution of real-time systems has shifted from ensuring task completion within strict deadlines to guaranteeing the data freshness of data across complex, multi-rate chains. This paper addressed the inherent "freshness-for-determinism" trade-off found in current automotive standards. While the Logical Execution Time (LET) paradigm provides structural determinism, it does so by injecting artificial latencies through output buffering. Conversely, standard Least Laxity First scheduling, while computationally efficient, suffers from data aging in multi-rate sensor fusion where fast data waits for high-latency dependencies.

We have proposed Scheduling based on \textbf{Data Freshness}, an inversion of the traditional deadline priority of real-time systems. By employing offset assignment based on data freshness, our constructive methodology assigns earlier static offsets to high-latency tasks, effectively "anchoring" the timeline to accommodate slower tasks in a task dependency chain. By delaying the critical tasks rather than buffering their outputs, we achieve \textbf{Just-In-Time (JIT)} data production, which aligns task execution with their result consumption at the succeeding tasks in multi-rate task-chains, minimizing data age. To resolve temporal conflicts in shared task structures, we introduced a \textbf{Consensus Search} backtracking algorithm that ensures global data freshness across heterogeneous consumer paths. Furthermore, we formally proved that this offset-based alignment \textbf{preserves the 100\% capacity of Global EDF}, demonstrating that data freshness can be guaranteed without sacrificing CPU utilization bounds.

Our work differentiates from the current state-of-the-art by moving from \textit{analysis} to \textit{construction}. While recent literature \cite{Gohari2022,Guenzel2023,Qingqiang2023,melani2015} provides rigorous methods to bound the Worst-Case Data Age (WCDA) and Worst-Case Response Time (WCRT) of existing schedules, our framework actively injects offsets to eliminate inherent data aging. Furthermore, unlike LET frameworks \cite{bini2023zero,bini2025jitter} that achieve predictability by delaying the availability of already-computed data, our method delays the sampling instant. This preserves the data freshness by ensuring that the multiple dependencies are synchronized to produce data as close as possible to their consumption, at the moment the slowest dependency becomes available.

%\section*{Acknowledgements}
%This work was partially funded by Fundação de Apoio da UFMG (Fundep), through Linha VI – Conectividade Veicular, a prioritary program from Mover (Mobilidade Verde e Inovação), project AutoDL (29271.03.01/2023.04-00).

\bibliographystyle{IEEEtran}
\bibliography{lisha,all,zes,let-rt}

\end{document}